\newcommand{\tuple}[1]{\langle #1 \rangle}
\newcommand{\floor}[1]{\left \lfloor #1 \right \rfloor }
\newcommand{\nat}{\mathbb{N}}
\algrenewcommand{\algorithmiccomment}[1]{\small \hfill\# #1}
\newtheorem{theorem}{Theorem}
\newtheorem{definition}{Definition}
\newtheorem{remark}{Remark}
\newtheorem{example}{Example}
\begin{document}

\title{Reusing Comparator Networks in Pseudo-Boolean Encodings}

\author{
\IEEEauthorblockN{Micha\l~Karpi\'nski}
\IEEEauthorblockA{\textit{Institute of Computer Science}, \textit{University of Wroc\l aw} \\ Joliot-Curie 15, 50-383 Wroc\l aw, Poland \\ karp@cs.uni.wroc.pl}
\and
\IEEEauthorblockN{Marek~Piotr\'ow}
\IEEEauthorblockA{\textit{Institute of Computer Science}, \textit{University of Wroc\l aw} \\ Joliot-Curie 15, 50-383 Wroc\l aw, Poland \\ mpi@cs.uni.wroc.pl}
}


{\bf IEEE copyright notice}

\noindent \textcopyright 2021 IEEE. Personal use of this material is permitted. Permission from IEEE must be obtained for all
other uses, in any current or future media, including reprinting/republishing this material for advertising
or promotional purposes, creating new collective works, for resale or redistribution to servers or lists, or
reuse of any copyrighted component of this work in other works.

Published in 2022 IEEE International Conference on Tools with Artificial Intelligence.

DOI: 10.1109/ICTAI56018.2022.00117

\newpage

\maketitle

\begin{abstract}
  A Pseudo-Boolean (PB) constraint is a linear inequality constraint over
  Boolean literals. One of the popular, efficient ideas used to solve
  PB-problems (a set of PB-constraints) is to translate them to SAT instances (encodings)
  via, for example, sorting networks, then to process those instances using state-of-the-art SAT-solvers.
  In this paper we show an improvement of such technique. By using a variation of a greedy
  set cover algorithm, when adding constraints to our encoding,
  we reuse parts of the already encoded PB-instance in order to decrease
  the size (the number of variables and clauses) of the resulting SAT instance.  
  The experimental evaluation shows that the proposed method
  increases the number of solved instances.
\end{abstract}

\begin{IEEEkeywords}
CNF encoding, Pseudo-Boolean constraints, Comparator networks, SAT-solvers, Set cover
\end{IEEEkeywords}

\section{Introduction}

  Boolean satisfiability (SAT) problem has been receiving a continuous interest
  in the field of computer science. Many hard decision problems can be reduced to SAT
  and be efficiently solved by recently-developed SAT-solvers. Some of those problems are
  formulated with the help of different high-level constraints, which should be either
  encoded into CNF formulas or solved inside a SAT-solver by a specialized extension.
  One type of such constraint is a Pseudo-Boolean constraint.

  A {\em Pseudo-Boolean constraint} (a PB-constraint, in short) is a linear inequality with integer coefficients,
  where variables are over Boolean domain. More formally, PB-constraints are of the form $a_1x_1 + a_2x_2 + \cdots + a_nx_n \;\# \; k$,
  where $n,k \in \nat$, $\{x_1,\dots,x_n\}$ is a set of propositional literals (that is, variables or their negations),
  $\{a_1,\dots,a_n\}$ is a set of integer coefficients, and $\# \in \{<,\leq,=,\geq,>\}$. PB-constraints are more expressive and
  more compact than clauses to represent some Boolean formulas,
  especially for optimization problems. PB-constraints are used in many real-life applications,
  for example, in logic synthesis, verification 
  and cumulative scheduling.

  One of the most successful techniques used to solve PB-problems is via translation to SAT, and more specifically,
  via encodings based on comparator networks. This method, first introduced by E{\'e}n and S{\"o}rensson \cite{minisatp},
  has been improved lately by Karpi\'nski and Piotr\'ow \cite{pos18,karpinski2019encoding} and implemented
  in the solver called \textsc{UWrMaxSat} \cite{piotrow2020uwrmaxsat}, which took the first place in Weighted Complete Track
  of the MaxSAT Evaluation 2020 competition. It appears that the same core-guided MaxSAT solving technique
  can be applied to PB-problems and it is used as default in the solver. 

  In this paper we further develop the UWrMaxSat solver by introducing a more efficient encoding of PB-constraints.
  We show that similarities in the structure of consecutive comparator networks introduced by the algorithm
  can be exploited, in order to reduce the size of the resulting CNF formula.

\subsection{Related Work}

  One way to solve a PB-constraint is to transform it to a SAT instance (via Binary Decision Diagrams (BDDs),
  adders or sorting networks \cite{minisatp}) and process it using -- increasingly improving --
  state-of-the-art SAT-solvers. Recent research have favored the approach that uses BDDs,
  which is evidenced by several new constructions and optimizations \cite{sakai2015}. The main advantage of
  BDD-based encodings is that the resulting size of the formula is not dependent on the size of
  the coefficients of a PB-constraint. Karpi\'nski and Piotr\'ow \cite{pos18,karpinski2019encoding} showed that
  encodings based on comparator networks can still be very competitive, and later incorporated their encodings
  in an incremental algorithm for solving SAT instances with Pseudo-Boolean goal functions.

  Another approach to solving Pseudo-Boolean instances is via method called {\em cutting planes},
  originally used in solving integer programs, which is done by iteratively refining a feasible set or
  objective function by means of linear inequalities, called {\em cuts}. It has been observed
  that cutting plane inference is stronger than resolution, therefore it is being used as viable alternative to clause-driven approaches.
  PB-solvers like \textsc{Sat4J}, or more recent \textsc{RoundingSat} \cite{elffers2018divide}, implement this technique.
  Lately, Devriendt et al. extended \textsc{RoundingSat} \cite{devriendt2021cutting} with core-guided search,
  and reported that the cutting planes method allows the solver to derive stronger, non-clausal cores,
  which leads to better updates of the solution bounds, meaning the optimal solution can be found faster.

  PB-constraints are gaining an increasing interest in the MaxSAT community, as one way to solve a MaxSAT instance is to
  encode the maximization objective as the PB-constraint, and then translate it -- as mentioned above --
  to an equisatisfiable SAT instance. We then iteratively enforce the current optimization
  result to be larger than the last one, until the optimum is found. Alternatively,
  a PB-problem can be translated into a MaxSAT-problem and solved by MaxSAT specific algorithms. Several successful ideas
  have emerged using this scheme to solve MaxSAT instances. For example, \textsc{QMaxSAT}
  implements encodings based on totalizer networks: the original totalizer sorting network,
  generalized totalizer networks, mixed radix weighted totalizer,
  and modulo totalizer. Paxian et al.  modified
  the Polynomial Watchdog encoding for solving PB-constraints by replacing the static
  watchdog with a dynamic one allowing to adjust the optimization goal, and showed how to apply it to 
  solve MaxSAT instances.

\subsection{Our Contribution}

  In \textsc{MiniSat+} \cite{minisatp}, the authors implement a scheme to decompose the PB-constraint
  into a number of interconnected sorting networks, where each sorter represents an adder of digits in a mixed radix base.
  The solver \textsc{UWrMaxSat} \cite{piotrow2020uwrmaxsat} implements a modified version of this algorithm,
  where (among other improvements) 4-Way Merge Selection Network \cite{karpinski2019encoding} is used as the underlying comparator network.

  In their paper, Ab{\'\i}o et al. \cite{abio2013parametric} note that their future work would be:
  {\em ``to develop encoding techniques for cardinality constraints that do not process constraints
  one-at-a-time but simultaneously, in order to exploit their similarities''}, and that they:
  {\em ``foresee that the flexibility of the approach presented here [...] will open the door
  to sharing the internal networks among the cardinality constraints present in a SAT problem''}.
  Up to our knowledge, the only paper that explores the possibility of reusing parts of cardinality constraint encodings is \cite{ihalainen2021refined}.
  It presents an algorithm, where, in the process of core-guided MaxSAT solving, a set of cardinality constraints
  is encoded more efficiently by finding common parts among totalizers in a greedy fashion.
  The fact that no further mention or research paper has appeared on this topic creates
  an opportunity for improvement in the field of encoding constraints. To this end we
  have modified the \textsc{UWrMaxSat} solver \cite{piotrow2020uwrmaxsat} in the following way.
  When the new network is introduced (to be encoded) in the interconnected sorting network construction,
  we make use of a variation of the known greedy set cover algorithm \cite{vazirani2013approximation} to try to maximize
  the size of the overlap between input sequences of the previous
  networks and the new one. We aggregate the overlapping input sub-sequences from the previous networks
  and the remaining non-overlapping inputs from the new network, then, those sequences are sorted by length
  and used as an input to the modified version of the 4-Way Merge Selection Network, which utilises a novel
  construction called a Multi-way Merging Network.

  We experimentally compare our new solver with state-of-the-art general constraints solver \textsc{NaPS} \cite{sakai2015}
  and a recently developed PB-solver \textsc{RoundingSat} (the core-guided version of Devriendt et al. \cite{devriendt2021cutting}),
  in order to prove that our techniques are good in practice. The set of benchmarks we use come from the Pseudo-Boolean 2016 competition.

\subsection{Structure of the Paper}

  In Section~\ref{sec:prel} we briefly describe our comparator network algorithm with a novel merging 
  procedure, then we explain the Mixed
  Radix Base technique used in \textsc{UWrMaxSat} and we show how it is applied to encode a PB-constraint by
  constructing a series of comparator networks. In Section~\ref{sec:alg} we show how to leverage the similarities
  between consecutive comparator networks introduced to the solver,
  in order to build an efficient PB-solving algorithm on top of \textsc{UWrMaxSat}. We present the
  results of our experiments in Section~\ref{sec:exp}, and we give concluding remarks in
  Section~\ref{sec:conc}.

\section{Preliminaries}\label{sec:prel}

  In this section we describe two key components for encoding a series of PB-constraints using our method:
  a selection network, and a mixed radix base technique. We start with basic definitions.

  \begin{definition}[Boolean sequences]\label{def:literals}
    A Boolean sequence (or simply -- a sequence) of length~$n$, say $\bar{x} = \tuple{x_1, \dots, x_n}$,
    is an element of $X^n$, where $X$ is a set of Boolean literals (i.e., variables or their negations).
    The length of $\bar{x}$ is denoted by $|\bar{x}|$. The number of occurrences of a
    given literal $l$ in $\bar{x}$ is denoted by $|\bar{x}|_l$.

    Let $S=\{\bar{s}_1, \bar{s}_2,\ldots, \bar{s}_m\}$, be a set of sequences of Boolean literals,
    and let $\bar{d}$ also be a sequence of Boolean literals.
    We say that sequences from $S$ are mutually disjoint in $\bar{d}$,
    if and only if for each literal $l$, $\sum_{1 \leq i \leq m} |\bar{s}_i|_l \leq |\bar{d}|_l$.
    Furthermore, we denote the length of $S$, to be $len(S)=\sum_{1 \leq i \leq m}|\bar{s}_i|$.

    We assume a global ordering of the literals of $X$, that is, literals in each defined sequence are arranged
    according to this ordering.
  \end{definition}

  \begin{definition}[binary sequences]\label{def:binary}
    A binary sequence of length $n$, say $\bar{x} = \tuple{x_1, \dots, x_n}$, is an element of $\{0,1\}^n$.
    We say that a sequence $\bar{x} \in \{0,1\}^n$ is {\em sorted} if $x_i \geq x_{i+1}$, $1 \le i < n$.
    A sequence $\bar{x} \in \{0,1\}^n$ is top $k$ sorted, for $k \leq n$, if $\tuple{x_1,\dots,x_k}$ is sorted and
    $x_k \geq x_i$, for each $i>k$. The length of $\bar{x}$ is denoted by $|\bar{x}|$. Given two binary sequences
    $\bar{x} = \tuple{x_1, \dots, x_n}$ and $\bar{y} = \tuple{y_1,\dots, y_m}$ we define {\em concatenation}
    as $\bar{x} :: \bar{y} = \tuple{x_1, \dots,x_n, y_1, \dots, y_m}$.
  \end{definition}

  \begin{remark}
    The reason we use two separate definitions for sequences is as follows. In Subsection \ref{ssec:mwmn} we
    show an improved merging network from our previous work \cite{karpinski2019encoding}.
    Thus, for completeness, we briefly describe the original merger in Subsection \ref{ssec:4oe}.
    It is easier to present and reason about those constructions in terms of oblivious
    sorting algorithms over the domain of binary sequences. For this reason Definition \ref{def:binary}
    is introduced. In the actual PB-solving (MaxSAT-solving etc.), inputs and outputs of constructions,
    which are later transformed to CNF formulas, are sequences of literals, with possible repetitions.
    Therefore, with the exceptions mentioned above, the rest of the paper uses Definition \ref{def:literals}
    for sequences.
  \end{remark}

  \subsection{4-Way Merge Selection Network}\label{ssec:4oe}

  The main tool in our encoding algorithms is a comparator network. Traditionally comparator networks are
  presented as circuits that receive $n$ inputs and permute them using comparators (2-sorters) connected by
  {\em ``wires''}. Each comparator has two inputs and two outputs. The {\em ``lower''} output is the minimum of inputs, and
  {\em ``upper''} one is the maximum. Their standard definitions and properties can be found, for example, in
  \cite{knuth}. The only difference is that we assume that the output of any sorting
  operation or comparator is in a non-increasing order.

  The main building block of our encoding is a direct selection network, which
  is a certain generalization of a comparator. Encoding of the direct selection
  network of order $(n,k)$ with inputs $\tuple{x_1,\dots,x_n}$ and
  outputs $\tuple{y_1,\dots,y_k}$ is the set of clauses
  $\{x_{i_1} \wedge \dots \wedge x_{i_p} \Rightarrow y_p \, : \, 1 \leq p \leq k, 1 \leq i_1 < \dots < i_p \leq n\}$.
  The direct $n$-sorter is a direct selector of order $(n,n)$.

  Sorting/selection networks used to encode constraints are usually constructed using a~divide-and-conquer principle, similar
  to the merge-sort algorithm. The key component of such network is a merging network -- a network that outputs
  sorted binary sequence (or top $k$ sorted sequence) given outputs of recursive calls. In the following definition
  we assume that comparators are functions and comparator networks are composition of
  comparators. This makes the presentation clear.

  \begin{definition}[m-merger]\label{def:oem}
    A comparator network $f^n_{k}$ is an {\em m-merger of order $(n,k)$},
    if for each tuple $T = \tuple{\bar{x}^1, \dots, \bar{x}^m}$,
    where each $\bar{x}^i$ is a top $k$ sorted binary sequence and $n = \sum_{i=1}^{m} |\bar{x}^i|$,
    $f^n_k(T)$ is top $k$ sorted and is a permutation of $\bar{x}^1 :: \cdots :: \bar{x}^m$.
  \end{definition}

  The formal description of the comparator network used in \textsc{UWrMaxSat} -- called 4-Way Merge Selection Network --
  and the proof of correctness can be found in \cite{karpinski2019encoding}.

  \subsection{Multi-way Merging Network}\label{ssec:mwmn}

  Each merging sub-network of the 4-Way Merge Selection Network receives input sequences of almost
  the same length (except for possibly the last input sequence), which is the consequence of splitting
  the input sequence into subsequences of length 5. Thus, input sequences of each "layer" of mergers are
  always sorted in a non-increasing order by length (notice that Definition \ref{def:oem} does not have such constraint).
  From \cite{karpinski2019encoding}, the number of variables/clauses used in the encoding of network
  $oe\_4merge^n_n(\bar{x}^1,\bar{x}^2,\bar{x}^3,\bar{x}^4)$
  is $O(\left(|\bar{x}^1|+|\bar{x}^2|+|\bar{x}^3|+|\bar{x}^4|\right)\log|\bar{x}^1|)$ (assuming $k=n$, for simplicity).
  So merging arbitrary number of sorted sequences of roughly the same size (for which total length is $n$) using 4-Way Mergers
  will result in an encoding consisting of $O(n\log^2 n)$ variables and clauses.
  In Section \ref{sec:alg} we will see a situation, were we would like to merge $m>1$ sorted sequences of varying lengths.
  Therefore, using the same merging strategy as in the 4-Way Merge Selection Network can be very inefficient. For example, if
  one of the input sequences is of length $n/2$ and the others are of length $c$ ($n/2c$ of them, where $c$ is a constant),
  then the longest sequence will participate in about $O(\log n)$ merging steps, each of them consisting of $O(n\log n)$
  variables and clauses. On the other hand, if we were to first merge all the small sequences, and then merge the longest one in,
  then the longest sequence (in the best scenario) would only need to participate once in the merging 
  process. Asymptotically, this does not change much, but in practice we can save a lot of
  variables and clauses used if we process the input sequences with this {\em skip} technique.
  
  \newdimen{\algindent}
  \setlength\algindent{1.2em}          
  \algnewcommand\LeftComment[2]{\hspace{#1\algindent} $\triangleright$ #2 \hfill}
  \algnewcommand\AND{\textbf{\&} }
  \begin{algorithm}[t!]
    \caption{multiWayMerge}\label{alg:mm}
    \begin{algorithmic}[1]
      \Require{A tuple $\tuple{\bar{x}^1, \dots, \bar{x}^m}$, where each $\bar{x}^i$ is a top $k$ sorted 
      binary sequence,
               $|\bar{x}^i| \geq |\bar{x}^{i+1}|$ (for $1 \leq i < m$), and $n = \sum_{i=1}^{m} 
               |\bar{x}^i|$.}
      \Ensure{The output is top $k$ sorted and is a permutation of the inputs.}
      \State $T_1 \gets \tuple{\bar{x}^1_1, \dots, \bar{x}^m_1} = \tuple{\bar{x}^1, \dots, \bar{x}^m}$; 
      $l=1$
      \State $\overline{out} \gets \emptyset$
      \While {$|T_l| > 1$}
        \State $J_l \gets \{j\leq |T_l|-4: \min(|\bar{x}^j_l|, k) > \sum_{c=1}^{4} |\bar{x}^{j+c}_l|\}$
        \State  $s \gets 0$
        \If{$|T_l| > 4$ \AND $J_l \neq \emptyset$} $s = \max(J_l)$ \EndIf
        \State $s' \gets m - ((m-s)\mod 4)$
        \Statex \LeftComment{1}{Do a merging pass with the 4-Way Merging}
        \For{$j \gets s+1$ \textbf{to} s' \textbf{step} 4}
          \State $oe\_4merge(\bar{x}^{j}_l,\bar{x}^{j+1}_l,\bar{x}^{j+2}_l,\bar{x}^{j+3}_l)$
        \EndFor
        \If{s' < m} $oe\_4merge(\bar{x}^{s'+1}_l,\dots,\bar{x}^{m}_l)$
        \EndIf
        \State Take sequences of indexes $\leq s$ and top $k$ elements from each sequence of the output 
        of
        (8--10) and rename them to $T_l=\{x^1_{l+1},\dots,x^{m'}_{l+1}\}$, in the same order.
        \State Aggregate the non-top $k$ elements and concatenate them with $\overline{out}$.
        \State $l \gets l+1, m \gets m'$
      \EndWhile
      \State \Return $x^1_l :: \overline{out}$      
    \end{algorithmic}
  \end{algorithm}

  We propose a Multi-way Merging Network, where a certain number of the longest input sequences (sorted 
  by length in a non-increasing order)
  are skipped, and the rest of the input sequences are selected for merging in a given iteration,
  only if the resulting sequences do not break the ordering (by length).
  The pseudo-code for this procedure can be seen in Algorithm~\ref{alg:mm}.

  \begin{theorem}
    The output of Algorithm \ref{alg:mm} is top $k$ sorted.
  \end{theorem}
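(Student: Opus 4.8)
The plan is to argue by induction on the iterations of the \textbf{while} loop, maintaining a loop invariant on the pair $(T_l, \overline{out})$ at the top of each iteration. Following the remark after Definition~\ref{def:binary}, I would reason entirely over binary sequences and keep track of the number of $1$'s. The invariant I would carry has four parts: (a)~every sequence in $T_l$ is top $k$ sorted; (b)~the sequences of $T_l$ have non-increasing lengths; (c)~the multiset of literals occurring in $T_l$ together with those in $\overline{out}$ equals the original input; and, crucially, (d)~\emph{if $\overline{out}$ contains a $1$, then some sequence in $T_l$ has at least $k$ ones} (equivalently, an all-$1$ prefix of length $k$).

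First I would dispatch the routine parts. For termination, I would note that every iteration collapses at least one group of two to four sequences into one via $oe\_4merge$, using that the skip index obeys $s \le m-4$ so that $m-s \ge 4$; hence $m$ strictly decreases until $|T_l| = 1$. Part~(a) is preserved because $oe\_4merge$ is an $m$-merger (Definition~\ref{def:oem}) and so returns a top $k$ sorted sequence, truncating a top $k$ sorted sequence to its first $k$ entries is again top $k$ sorted, and the skipped sequences are copied unchanged. Part~(c) is immediate since merging only permutes and the kept/aggregated split loses nothing. Part~(b) follows from the defining inequality $\min(|\bar{x}^s_l|,k) > \sum_{c=1}^{4}\min(|\bar{x}^{s+c}_l|,k)$ of the skip index, which together with $\sum_{c=1}^{4}\min(|\bar{x}^{s+c}_l|,k) \ge \min(\sum_{c=1}^{4}|\bar{x}^{s+c}_l|, k)$ shows that the last kept sequence is at least as long as the first merged-and-truncated one, while the merged outputs inherit non-increasing lengths from the input order.

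The heart of the argument, and the step I expect to be the main obstacle, is invariant~(d). The key observation is that a $1$ can enter $\overline{out}$ only from the non-top-$k$ part of a merged group, and for a top $k$ sorted result a $1$ past position $k$ forces the first $k$ entries to all be $1$; hence such a group already contains at least $k$ ones, and the truncated sequence it contributes to $T_{l+1}$ is an all-$1$ block of length $k$. So the first time $\overline{out}$ receives a $1$, part~(d) becomes true. To see it stays true, I would show the property ``some sequence of $T_l$ has at least $k$ ones'' is self-perpetuating: such a sequence is either skipped (copied whole, still $\ge k$ ones) or placed in a merged group, whose result then has $\ge k$ ones and truncates to an all-$1$ length-$k$ block; either way $T_{l+1}$ again contains a sequence with $\ge k$ ones. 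Some care is needed with long skipped sequences carrying $1$'s beyond position $k$: these are never discarded while skipped, and when eventually merged they only reinforce the $\ge k$-ones condition, so they cause no trouble.

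Finally I would read off the conclusion at termination, when $T_l$ is a single top $k$ sorted sequence $\bar{x}^1_l$ and the output is $\bar{x}^1_l :: \overline{out}$. If $\overline{out}$ contains a $1$, invariant~(d) forces $\bar{x}^1_l$ to have at least $k$ ones, so its first $k$ entries are all $1$ and become the first $k$ entries of the output, which is thus top $k$ sorted whatever $\overline{out}$ holds. If $\overline{out}$ contains no $1$, then appending it after the top $k$ sorted $\bar{x}^1_l$ only adds $0$'s, which can break neither the sortedness of the first $k$ positions nor the domination by the $k$-th entry; the short case $|\bar{x}^1_l| < k$ is handled identically. This exhausts the cases and gives that the output of Algorithm~\ref{alg:mm} is top $k$ sorted.
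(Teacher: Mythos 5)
Your proposal is correct and follows the same skeleton as the paper's (very terse) proof sketch: induction over the iterations of the \textbf{while} loop, preservation of the non-increasing length ordering, and reliance on the already-proven correctness of $oe\_4merge$. Your invariant~(d) --- a $1$ can reach $\overline{out}$ only from a merged group that already has at least $k$ ones, which then contributes an all-$1$ length-$k$ block to $T_{l+1}$, so at termination $\bar{x}^1_l$ dominates everything appended after it --- is exactly the point the paper's one-line remark about $\overline{out}$ leaves implicit, and making it explicit is a genuine improvement rather than a deviation.
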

  
  \begin{IEEEproof}
    See extended version of the paper \cite{karpinski2022reusing}.
  \end{IEEEproof}

  \subsection{Mixed Radix Base Technique}\label{ssec:mix}
  The authors of \textsc{MiniSat+} devised a method to decompose
  a PB-constraint into a number of interconnected sorting networks, where sorters
  play the role of adders on unary numbers in a {\em mixed radix representation}.
  Here we present a slightly optimized version proposed in \textsc{NaPS} \cite{sakai2015},
  which is already implemented in \textsc{UWrMaxSat}.

  In the classic base $r$ radix system, positive integers are represented
  as finite sequences of digits $\mathbf{d} = \tuple{d_0,\dots,d_{m-1}}$
  where for each digit $0 \leq d_i < r$, and for the most significant digit, $d_{m-1} > 0$.
  The integer value associated with $\mathbf{d}$ is $v=d_0 + d_1r + d_2r^2+\dots + d_{m-1}r^{m-1}$.
  A mixed radix system is a generalization where a base $\mathbf{B}$ is a
  sequence of positive integers $\tuple{r_0,\dots,r_{m-1}}$. The integer value associated
  with $\mathbf{d}$ is $v = d_0w_0 + d_1w_1 + d_2w_2 + \dots + d_{m}w_{m}$
  where $w_0=1$ and for $i\geq 0$, $w_{i+1} = w_ir_i$. For example, the number
  $\tuple{2,4,10}_{\mathbf{B}}$ in base $\mathbf{B}=\tuple{3,5}$ is interpreted as
  $2 \times \mathbf{1} + 4 \times \mathbf{3} + 10 \times \mathbf{15} = 164$ (values of $w_i$'s in boldface).

  The decomposition of a PB-constraint into sorting networks is roughly as follows:
  first, find a ``suitable'' finite base $\mathbf{B}$ for the given set of coefficients,
  for example, in \textsc{MiniSat+} the base is chosen so that the sum of
  all the digits of the coefficients written in that base is as small as possible. Then for
  each element $r_i$ of $\mathbf{B}$ construct a sorting network where the inputs
  of the $i$-th sorter will be those digits $\mathbf{d}$ (from the coefficients) where $d_i$
  is non-zero, plus the potential carry digits from the $(i-1)$-th sorter.

  We show a construction of a sorting network system using an example. We present
  a step-by-step process of translating a PB-constraint $\psi = 2x_1+2x_2+2x_3+2x_4+5x_5+18x_6 \leq 22$.
  Let $\mathbf{B}=\tuple{2,3,3}$ be the considered mixed radix base.
  Weights of the digit positions of $\mathbf{B}$ are $\bar{w}=\tuple{1,2,6,18}$. First, we normalize the
  constraint by adding a constant $13=\tuple{1,0,2,0}_{\mathbf{B}}$ to both sides of $\psi$, resulting in
  $\psi' = 2x_1+2x_2+2x_3+2x_4+5x_5+18x_6 + 13 < 36$. Notice that $36=\tuple{0,0,0,2}_{\mathbf{B}}$, having 
  only one non-zero digit in base $\mathbf{B}$ (the reason for this step will be revealed later). 

  Thus, the decomposition of the LHS of $\psi'$ is:
  \[
    \mathbf{1} \cdot (1 + x_5) + \mathbf{2} \cdot (x_1 + x_2 + x_3 + x_4 + 2x_5) + \mathbf{6} \cdot (1 + 1) + \mathbf{18} \cdot (x_6)
  \]  
  Now we construct a series of four sorting networks in order to encode the sums at each digit position of $\bar{w}$.
  Given values for the variables, the sorted outputs from these networks represent unary numbers $d_1$,$d_2$,$d_3$,$d_4$
  such that the LHS of $\psi'$ takes the value $\mathbf{1} \cdot d_1 + \mathbf{2} \cdot d_2 + \mathbf{6} \cdot d_3 + \mathbf{18} \cdot d_4$.

  The final step is to encode the carry operation from each digit position to the next. Observe that if a binary sequence
  $\overline{b} = (b_1,\dots,b_n)$ is sorted then for any $d$ and $r$, where $0 \le d \le n$ and $1 < r  \le n$,
  if $\overline{b}$ contains exactly $d$ ones then  $(b_r, b_{2r}, b_{3r}, \dots)$ contains exactly $\floor{\frac{d}{r}}$ ones,
  that is, the carry value for a digit $r$. To this end, each third output of the second and third network
  is fed into the next network as carry input, and the second output of the first network is fed to the second network.
  The full construction is illustrated in Figure \ref{fig:ex1}.

  \begin{figure}[!t]
    \centering
    \includegraphics[scale=0.90]{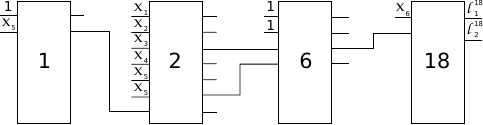}
    \caption{Decomposition of a PB-constraint into a series of interconnected sorting networks.}
    \label{fig:ex1}
  \end{figure}

  To enforce the constraint, we have to add clauses representing the relation $< 36$ (in base $\mathbf{B}$).
  But since we choose the constant $36$ such that in base $\mathbf{B}$ it has only one non-zero digit,
  enforcing the $< 36$ constraint is as easy as adding a singleton clause $\neg l^{18}_2$.
  Notice that outputs of any other network which are not used as carry digits, are irrelevant in enforcing the constraint.

\section{The Algorithm}\label{sec:alg}

  In this section we present our method to reuse comparator networks
  while encoding a series of PB-constraints with mixed radix base technique
  explained in the previous section. We start with an example.

  \begin{example}
  Suppose we want to encode an instance of a PB-problem consisting of (possibly among others) the following 
  two PB-constraints:
    \begin{align*}
      \phi \, = & \,\, x_1 + x_2 + x_3 + x_4 + x_5 < 4, \\
      \psi \, = & \,\, 2x_1 + 2x_2 + 2x_3 + 2x_4 + 5x_5 + 18x_6 + 13 < 36
    \end{align*}

  \noindent We choose the same base as before for $\psi$, i.e., $\mathbf{B}=\tuple{2,3,3}$, but first we encode the constraint~$\phi$.
  Since $\phi$ is a cardinality constraint, its base is empty, and the decomposition consists of a single sorting network (the upper sorter in Figure \ref{fig:reuse}).
  The constraint $\psi$ is the same as in the example from the previous section, but when decomposing it to sorting networks
  we notice, that the input sequence of the second network (which is $\tuple{x_1, x_2, x_3, x_4, x_5, x_5, c}$
  where $c$ is a carry digit; see Figure \ref{fig:ex1}) contains the entire input sequence of the network of the encoding of~$\phi$.
  The fact that we have already sorted the sequence $\tuple{x_1, x_2, x_3, x_4, x_5}$ can now be exploited in the following way.
  We reduce the second sorting network of $\psi$ to contain only two inputs: $x_5$ and $c$.
  We now have two sorted output sequences for inputs $\tuple{x_1, x_2, x_3, x_4, x_5}$ and $\tuple{x_5, c}$.
  Merging them to a single sorted sequence is equivalent to the sorting of the entire
  input sequence $\tuple{x_1, x_2, x_3, x_4, x_5, x_5, c}$, but with fever
  variables and clauses used. To this end, we take the outputs of the first network for the encoding of $\phi$ and feed
  them to the merging network (in this case, a 4-Way Merging Network; see dashed lines in Figure \ref{fig:reuse}).
  We do the same with the outputs of the sorting of $\tuple{x_5, c}$. The rest of the construction is done as in the previous example.
  
  There is another benefit of this technique, which can be demonstrated in this example.
  Notice, that in order to enforce the $< 4$ constraint of $\phi$, we set the fourth
  output of its sorting network to false ($0$, in Figure \ref{fig:reuse}).
  We can also set all the lower outputs to false as well, for better unit propagation.
  Those values will be also fed to any merger which reuses this network,
  as seen in Figure \ref{fig:reuse}. In consequence, those zeros can be used to further simplify mergers
  and any other sorter down the line, as seen in the figure,
  which makes the entire encoding smaller.
  \end{example}

  \begin{figure}[!t]
    \centering
    \includegraphics[scale=0.80]{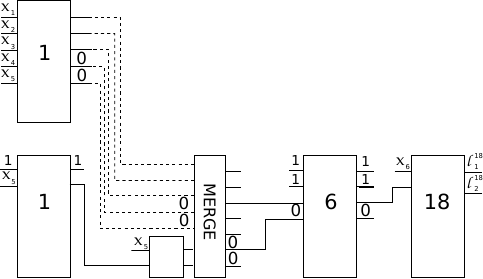}
    \caption{An example of a comparator network being reused in the encoding of another PB-constraint.}
    \label{fig:reuse}
  \end{figure}

  From the example we can see that a simple re-wiring of the inputs/outputs of comparator networks has a potential to reduce the
  size of the encoding of a series of PB-constraints. In contrast to the example, in real-life instances, there might be more than
  one sorter eligible for re-wiring, in fact, there might be more than one way to choose a subset from the available set of sorters
  to be reused in the currently constructed sorter. This leads to the following optimization problem.

  \noindent \textsc{BooleanSequenceCover}\\
  \noindent {\em Input:} A set $S=\{\bar{s}_1,\ldots,\bar{s}_n\}$ (for $n \in \nat$), where each $\bar{s}_i=\tuple{x^i_1,\ldots,x^i_{k_i}}$
                         (for $1 \leq i \leq n$, and $k_i \in \nat$) is a sequence of Boolean literals.
                         Let $\bar{d}=\tuple{x^d_1,\ldots,x^d_m}$ be a sequence of Boolean literals.\\
  \noindent {\em Output:} A set $S' \subseteq S$, where all sequences in $S'$ are mutually disjoint in $\bar{d}$, and the value
                          $len(S')$ is maximized.

  \vspace{\baselineskip}

  \noindent In the \textsc{BooleanSequenceCover} problem, $\bar{d}$ represents an input sequence of the newly constructed sorter,
  and the set $S$ contains input sequences of sorters already present in the encoding.
  The result, a set $S'$, is a set of sorters which inputs cover the most inputs
  of the new sorter. Unfortunately, we do not expect to find a polynomial-time algorithm
  for finding an optimal solution to this problem.

  \begin{theorem}\label{thm:np}
    \textsc{BooleanSequenceCover} is NP-hard.
  \end{theorem}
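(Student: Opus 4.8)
The plan is to establish NP-hardness by a polynomial-time reduction from a strongly NP-complete packing problem, namely \textsc{3-Dimensional Matching} (3DM). Recall that a 3DM instance consists of three pairwise disjoint sets $X$, $Y$, $Z$, each of size $q$, together with a collection $T \subseteq X \times Y \times Z$ of triples, and asks whether $T$ contains a perfect matching: a subcollection of $q$ triples that are pairwise disjoint and together cover $X \cup Y \cup Z$.

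First I would describe the encoding. I would treat the $3q$ elements of $X \cup Y \cup Z$ as $3q$ distinct Boolean literals and let $\bar{d}$ be any sequence that lists each of these literals exactly once (so $|\bar{d}|_l = 1$ for every such literal $l$, and $|\bar{d}| = 3q$). For each triple $(x,y,z) \in T$ I would create one input sequence $\bar{s}_{(x,y,z)} = \tuple{x,y,z}$, arranged according to the global literal ordering, and collect these into $S$. This construction is clearly computable in time polynomial in the size of the 3DM instance, since every sequence has length exactly $3$ and $|S| = |T|$. This is the crucial point that a naive reduction from \textsc{Subset Sum} (a single literal with numeric multiplicities) would violate, because encoding those multiplicities as sequence \emph{lengths} is only pseudo-polynomial.

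Next I would verify correctness. Because each literal occurs once in $\bar{d}$, the condition that the sequences of a chosen $S' \subseteq S$ are mutually disjoint in $\bar{d}$ unfolds, via Definition~\ref{def:literals}, to $\sum_{\bar{s} \in S'} |\bar{s}|_l \leq 1$ for every literal $l$, which says precisely that the triples selected in $S'$ use each element of $X \cup Y \cup Z$ at most once, i.e.\ they form a 3-dimensional matching. Since every chosen sequence has length $3$, we have $len(S') = 3|S'|$, so maximizing $len(S')$ is the same as maximizing the number of pairwise disjoint triples. Consequently the maximum value of $len(S')$ equals $3q = |\bar{d}|$ if and only if $T$ admits a perfect matching.

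Finally I would conclude. Any algorithm solving \textsc{BooleanSequenceCover} optimally lets us read off the maximum value of $len(S')$ and compare it with $3q$, thereby deciding the 3DM instance; hence a polynomial-time optimal algorithm would imply $\mathrm{P} = \mathrm{NP}$, and the problem is NP-hard. I expect the only genuine obstacle to be the choice of source problem: the most tempting reduction (from \textsc{Subset Sum}) is invalid for this formulation precisely because sequence length is measured in unary, so the argument hinges on selecting a strongly NP-hard problem with constant-size sets, such as 3DM or equivalently \textsc{Exact Cover by 3-Sets}.
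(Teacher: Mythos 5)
Your proposal is correct and follows essentially the same route as the paper: the paper reduces from maximum matching in a 3-uniform hypergraph by turning each 3-element edge into a length-3 sequence and taking $\bar{d}$ to be the vertex set, which is the same construction you give for the tripartite special case (3DM). The only cosmetic difference is that the paper argues via the optimization version (optimal covers correspond exactly to maximum matchings) while you go through the decision threshold $len(S')=3q$; both are valid.
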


  \begin{IEEEproof}
    See extended version of the paper \cite{karpinski2022reusing}.
  \end{IEEEproof}

  We settle on an approximate solution, as the consequence of Theorem \ref{thm:np}. To this end, we use a greedy algorithm
  for solving the \textsc{BooleanSequenceCover} problem, presented in Algorithm \ref{alg:1}. It is a simple strategy, where
  we choose a sequence which covers the most elements of $\bar{d}$. Then we remove this sequence from $S$ and $\bar{d}$,
  and the process is repeated until reaching a fix point.

  \begin{algorithm}[t!]
    \caption{greedyBooleanSequenceCover}\label{alg:1}
    \begin{algorithmic}[1]
      \Require{A set $S=\{\bar{s}_1,\ldots,\bar{s}_n\}$ (for $n \in \nat$), where each $\bar{s}_i=\tuple{x^i_1,\ldots,x^i_{k_i}}$
               (for $1 \leq i \leq n$, and $k_i \in \nat$) is a sequence of Boolean literals; a sequence $\bar{d}=\tuple{x^d_1,\ldots,x^d_m}$}
               of Boolean literals.
      \Ensure{A set $S' \subset S$, where all sequences in $S'$ are mutually disjoint in $\bar{d}$.}
      \State $S' \gets \{\}$
      \MRepeat
        \State {\bf let} $\bar{s} \in S$ be the longest subsequence of $\bar{d}$
        \If{$\bar{s} = \emptyset$} \Return $S'$ \EndIf
        \State $S \gets S \setminus \{\bar{s}\}$; $S' \gets S' \cup \{\bar{s}\}$ 
        \State $\bar{d} \gets \bar{d} - \bar{s}$
      \EndRepeat
    \end{algorithmic}
  \end{algorithm}
 
  One can see the similarity to the known greedy set cover algorithm \cite{vazirani2013approximation}
  -- both algorithms pick the set (sequence) that covers the largest number of elements.
  The difference is that here, we try to maximize the number of covered elements,
  while in the set cover problem we minimize the number of sets used in the cover. We note that -- similar to the greedy set cover --
  this strategy does not necessarily output an optimal solution in our case.

  \begin{theorem}\label{thm:main}
    There exist $S,S',O$ and $d$, such that $S',O \subseteq S$, $S'$ is an output of Algorithm~\ref{alg:1} on $S$ and $d$,
    $O$ is an optimal solution to the \textsc{BooleanSequenceCover} problem given inputs $S$ and $d$, such that
    $len(S') = \sqrt{len(O)}$. Moreover, for any $S,S',O$ and $d$, such that $S',O \subseteq S$, $S'$ is 
    an output of Algorithm \ref{alg:1} on $S$ and $d$,
        $O$ is an optimal solution to the \textsc{BooleanSequenceCover} problem given inputs $S$ and $d$,
    we have $len(S') \geq \sqrt{len(O)}$.
  \end{theorem}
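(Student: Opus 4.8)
The statement has two independent halves: an existential/tightness claim and a universal lower bound $len(S')\ge\sqrt{len(O)}$. My plan is to build one explicit instance for the first half and, for the second half, to reduce the inequality to two elementary facts about the greedy run, the harder of which is a counting lemma bounding the number of sequences in an optimal solution.

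For the tightness instance, fix an integer $M\ge 2$ and let $\bar{d}$ consist of $M$ copies of a single literal $z$ together with $M(M-1)$ pairwise-distinct fresh literals $f_{i,j}$ ($1\le i\le M$, $1\le j\le M-1$), so $|\bar{d}|=M^2$. Put $S=\{\bar{g},\bar{o}_1,\dots,\bar{o}_M\}$ with the ``bait'' $\bar{g}=\tuple{z,\dots,z}$ ($M$ copies) and $\bar{o}_i=\tuple{z,f_{i,1},\dots,f_{i,M-1}}$; every sequence has length exactly $M$. I would first check that $O=\{\bar{o}_1,\dots,\bar{o}_M\}$ is a feasible, hence optimal, solution: the $\bar{o}_i$ are mutually disjoint in $\bar{d}$ (they use $z$ a total of $M=|\bar{d}|_z$ times and each fresh literal once), and $len(O)=M^2=|\bar{d}|$ cannot be exceeded. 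Since all candidate sequences have equal length, the unspecified tie-breaking in Algorithm~\ref{alg:1} permits it to select $\bar{g}$ first; this empties the budget of $z$, after which no $\bar{o}_i$ is a subsequence of the remaining $\bar{d}$, so the run terminates with $S'=\{\bar{g}\}$ and $len(S')=M=\sqrt{M^2}=\sqrt{len(O)}$. (Letting $M\to\infty$ incidentally shows the gap is unbounded.)

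For the universal bound I would let $g=|\bar{s}|$ be the length of the first sequence greedy selects, i.e.\ the longest sequence fitting in $\bar{d}$; every sequence of $O$ fits in $\bar{d}$, so $g$ dominates every optimal length and $len(O)\le g\,|O|$, while trivially $len(S')\ge g$. Hence it suffices to prove the counting lemma $|O|\le len(S')$, for then $len(O)\le g\,|O|\le len(S')\cdot len(S')$. To prove it, write $s_l$ for the number of copies of literal $l$ consumed by $S'$ and $r_l=|\bar{d}|_l-s_l$ for the residual budget at termination, so $len(S')=\sum_l s_l$. I would assign each $\bar{o}\in O$ to a single literal: a sequence not chosen by greedy is, at termination, over budget on some witness literal $l$ with $|\bar{o}|_l>r_l$, and a sequence that was chosen is assigned to any literal it actually uses. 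Writing $A_l$ for the optimal sequences assigned to $l$, mutual disjointness of $O$ gives $\sum_{\bar{o}\in A_l}|\bar{o}|_l\le|\bar{d}|_l$; splitting $A_l$ into its $b$ ``witnessed'' members (each using $\ge r_l+1$ copies of $l$) and its $p$ ``chosen'' members (each using $\ge 1$) yields $b(r_l+1)+p\le|\bar{d}|_l$, whence $|A_l|=b+p\le|\bar{d}|_l-b\,r_l$, and the elementary inequality $r_l(b-1)\ge0$ (together with the fact that, when $b=0$, the chosen members' usage of $l$ is itself part of $s_l$) gives $|A_l|\le s_l$. Summing over $l$ gives $|O|=\sum_l|A_l|\le\sum_l s_l=len(S')$.

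The main obstacle is exactly this counting lemma. The delicacy is that feasibility fails as soon as a \emph{single} literal is over budget, so one greedy pick may block many optimal sequences at once; the argument must exploit that a high-multiplicity literal can only be over-consumed if greedy already spent many copies on it, which is precisely what the split into the witnessed ($b\ge1$) and chosen ($b=0$) cases captures, and it is what pins the constant to exactly $g\cdot len(S')$ rather than a weaker multiple. A secondary point I would be careful to state is the tie-breaking assumption in the construction: the example relies only on Algorithm~\ref{alg:1} being \emph{allowed} to choose $\bar{g}$ among equal-length candidates, which matches the unspecified selection rule in its pseudocode.
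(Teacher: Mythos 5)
Your proof is correct, and while your tightness instance is essentially the paper's construction in a different guise (a single length-$\sqrt{m}$ ``bait'' that a tie-breaking greedy may pick first and that conflicts with every member of an optimal cover of all of $\bar{d}$ --- you realize the conflict through the multiplicity of one repeated literal $z$, the paper through a diagonal transversal $s_g=\tuple{x_{\sqrt{m}},x_{2\sqrt{m}},\dots,x_m}$ hitting each of $\sqrt{m}$ disjoint blocks of distinct literals), your argument for the universal bound $len(S')\ge\sqrt{len(O)}$ takes a genuinely different route. The paper charges each $\bar{o}\in O$ to the first greedy iteration $i$ whose covered positions it meets, obtaining a partition $O=O_1\cup\dots\cup O_{|S'|}$ with $|\bar{o}|\le|\bar{s}_i|$ for $\bar{o}\in O_i$ (such an $\bar{o}$ was still available at step $i$, so greedy's pick was at least as long) and $|O_i|\le|\bar{s}_i|$ (each member of $O_i$ consumes a distinct occurrence inside $\bar{s}_i$'s footprint), whence $len(O)\le\sum_i|\bar{s}_i|^2\le\left(\sum_i|\bar{s}_i|\right)^2=len(S')^2$. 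You instead factor the bound as $len(O)\le g\cdot|O|$ with $g=|\bar{s}_1|\le len(S')$ and prove the counting lemma $|O|\le len(S')$ by a per-literal charging argument; that lemma and its case analysis ($b\ge1$ versus $b=0$) check out, since every $\bar{o}\in O\setminus S'$ is still present in $S$ at termination and hence exceeds the residual budget of some witness literal. Both proofs are sound. The paper's partition is arguably the more economical path --- it delivers your counting lemma for free, since $|O|=\sum_i|O_i|\le\sum_i|\bar{s}_i|=len(S')$, and it gives the slightly sharper intermediate bound $\sum_i|\bar{s}_i|^2$ --- whereas your version works directly with literal multiplicities rather than positions of $\bar{d}$, so it matches the multiset semantics of Definition~\ref{def:literals} without first fixing position-assignment functions $f_O,f_{S'}$. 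Two small points you should make explicit: the degenerate case $S'=\emptyset$ (where $g$ is undefined) forces $O=\emptyset$ and is trivial, and empty sequences may be assumed absent from $O$ so that every chosen member indeed has a literal to be assigned to.
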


  \begin{IEEEproof}
    See extended version of the paper \cite{karpinski2022reusing}.
  \end{IEEEproof}
  
  Even though the technique we use is simple and not optimal, in the next section we show that it is still beneficial in practice.

\section{Experimental Evaluation}\label{sec:exp}

We have implemented the discussed greedy-reuse technique in the PB and MaxSAT solver
\textsc{UWrMaxSat} (commit 8d6c451, compiled without SCIP and MaxPre libraries but with
\textit{bigint} support\footnote{See \url{https://github.com/marekpiotrow/UWrMaxSat}}) and run two
versions of it: \textsc{UWr-R} with greedy-reuse switched on and \textsc{UWr-N} — without it. A
variable \texttt{opt\_reuse\_sorters} in the source code was used to switch between these two versions.
\textsc{COMiniSatPS} by Chanseok Oh \cite{oh2016improving} is our default SAT solver.

The set of instances we have selected as benchmarks is from the most recent Pseudo-Boolean Competition
2016.\footnote{See \url{http://www.cril.univ-artois.fr/PB16/}} We use instances with linear,
Pseudo-Boolean constraints that encode optimization problems, that is, from OPT-SMALLINT-LIN
category (1600 instances).

As hardware, we used the machines with Intel(R) Core(TM) i7-2600 CPU @ 3.40GHz and 16GB of memory
running Ubuntu linux, version 16.04.3. The timeout limit was set to 5000 seconds and the memory
limit is 15 GB, which are enforced with the following commands: \texttt{ulimit -Sv 15872000} and
\texttt{timeout -k 5 5000 <solver> <parameters> <instance>}.

\textsc{UWr-R} found the possibility to reuse some previously-generated sorters in 130 instances
and, in effect, it reduced the number of variables in their encoding by 8.7\% (from 416~659~544 of
\textsc{UWr-N} to 380~573~352) and the number of clauses by 7.7\% (from 1~395~227~515 of
\textsc{UWr-N} to 1~287~723~519) in total.

We have compared our solvers with two state-of-the-art general purpose PB-constraint solvers. 

The first solver is \textsc{NaPS} 1.02b (commit  5956838a, version \textit{bignum}) by Masahiko
Sakai and Hidetomo Nabeshima \cite{sakai2015} which implements improved ROBDD structure for encoding
constraints in band form, as well as other optimizations. As far as we know, it does not contain any
core-guided techniques. This solver was built on the top of \textsc{MiniSat+}, similar to
\textsc{UWrMaxSat}. \textsc{NaPS} won two of the optimization categories in the Pseudo-Boolean
Competition 2016: OPT-BIGINT-LIN and OPT-SMALLINT-LIN. We have launched the main program of
\textsc{NaPS} on each instance, with parameters \texttt{-a -s}.

The second one is \textsc{RoundingSat} (abbreviated to \textsc{RSat}) with core-guided
optimizations (commit b5de84db) by J.~Devriendt, S.~Gocht, E.~Demirović, J.~Nordstr{\"o}m
and P.~J.~Stuckey \cite{devriendt2021cutting}. It extends cutting planes methods with
core-guided search techniques. The results of their experiments presented in the paper
show that it is one of the best PB-solvers nowadays. It was run with just one option
\texttt{--print-sol=1}.

We have launched our solvers \textsc{UWr-R} and \textsc{UWr-N} on each instance, with parameters
\texttt{-a -s -cs}, where \texttt{-cs}  means that in experiments the solver used just one encoding
technique described in this paper: the Multi-way Merge Selection Networks described in Section \ref{sec:prel},
combined with a direct encoding of small sub-networks. Thus, none of the constraints was encoded with BDDs or Adders.

  \begin{figure}[t!]
    \centering
    \includegraphics[width=0.8\columnwidth]{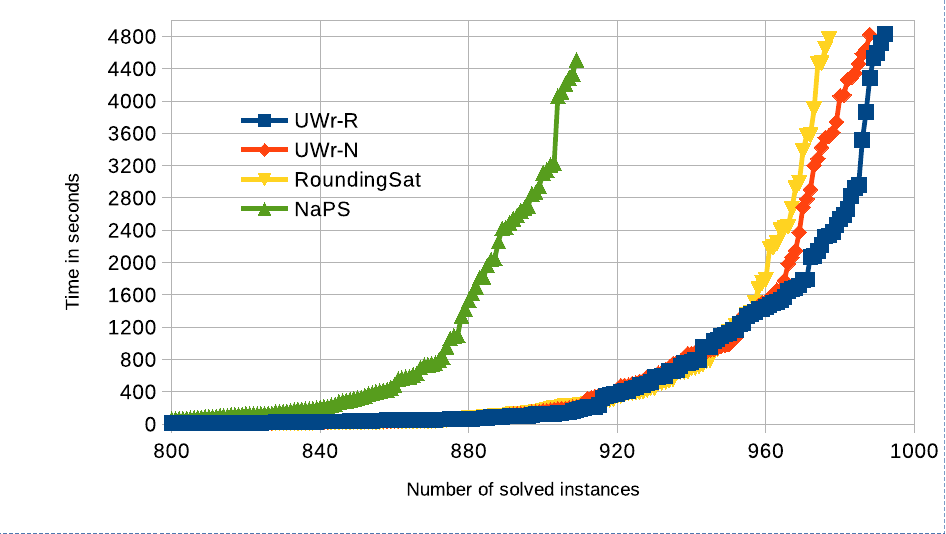}
    \caption{Cactus plot for OPT-SMALLINT-LIN division of PB16 suite}
    \label{fig:plot}
  \end{figure}

  \begin{table}[ht]
    \centering
    \begin{tabular}{ c || c || c | c | c | c | c | }
      solver         & Solved    & Opt       & UnSat     & cpu (h)       & scpu (s)    & avg(scpu)   \\ 
      \hline\hline
\textsc{UWr-R}       & {\bf 992} & {\bf 906} & 86        & {\bf 845.6} & 127 033      & 128.06      \\ 
      \hline
\textsc{UWr-N}       & 988       & 901       & 87        & 858.0       & 135 568      & 137.21      \\ 
      \hline
\textsc{RSat} & 977       & 877       & {\bf 100} & 886.0       & {\bf 97 806} & {\bf 100.11}\\ 
      \hline
\textsc{NaPS}        & 909       & 822       & 87        & 976.1       & 115 977      & 127.59      \\ 
      \hline
      \multicolumn{7}{c}{ } \\
    \end{tabular}
    \caption{Results summary for the OPT-SMALLINT-LIN category}
    \label{tbl:solved}
  \end{table}

In Table \ref{tbl:solved}, we present the numbers of solved instances by each solver. In the {\bf Solved}
column we show the total number of solved instances, which is the sum of the number of instances
where the optimum was found (the {\bf Opt} column) and the number of unsatisfiable instances found
(the {\bf UnSat} column). In the {\bf cpu} column we show the total solving time (in hours) of the
solver over all instances, and {\bf scpu} is the total solving time over solved instances
only. The average has been computed as follows: $\textbf{avg(scpu)} = \textbf{scpu} / \textbf{solved}$.

  \begin{table}[ht]
    \centering
    \begin{tabular}{ c || c | c | c | c | }
      solver  & \textsc{UWr-R} & \textsc{UWr-N} & \textsc{RSat} & \textsc{NaPS} \\ \hline\hline
\textsc{UWr-R}       & -        & 8          & 81            & 92     \\ \hline
\textsc{UWr-N}       & 4        & -          & 80            & 88     \\ \hline
\textsc{RSat} & 66       & 69         & -             & 120    \\ \hline
\textsc{NaPS}        & 9        & 9          & 52            & -      \\ \hline
      \multicolumn{5}{c}{ } \\
    \end{tabular}
    \caption{At the intersection of X row and Y column: the number of instances solved by X but not 
    solved by Y.}
    \label{tbl:symm-diff}
  \end{table}

Results in Table \ref{tbl:solved} and Figure \ref{fig:plot} clearly show that the greedy-reuse
technique in our sorter-base encoding of \textsc{UWr-R} increased the number of solved instances and
reduced the average running time with respect to \textsc{UWr-N}. Therefore, it is chosen as default
in the current version of \textsc{UWrMaxSat}. On the other hand, \textsc{RoundingSat} was the best
in deciding on UNSAT instances and had the best average solving time with respect to solved
instances. Since each of \textsc{UWrMaxSat} and \textsc{RoundingSat} implements different strategies
of solving PB-problems, it is worth noting that there are quite a lot of instances solved by one
solver, but not by the other, and vice versa. See Table \ref{tbl:symm-diff} for details.

\section{Conclusions}\label{sec:conc}

We define the optimization problem \textsc{BooleanSequenceCover} and
show that, despite its NP-hardness, a greedy strategy can be effectively implemented and used to get
smaller sorter-base encoding of industrial instances of NP-hard problems.
The application of an approximation algorithm in PB-constraint encoding techniques gives new
opportunities for many interesting research directions, for example, finding negative
results (i.e., inapproximability results) could be interesting.

Finally, the results of experiments suggest that \textsc{UWrMaxSat} should be probably extended with
some cutting planes methods.

\bibliographystyle{IEEEtran}
\def\IEEEbibitemsep{0pt plus .5pt}
\bibliography{IEEEabrv,reuse}

\end{document}